\RequirePackage{etoolbox}
\newtoggle{article}

 \toggletrue{article} 

\newcommand{\inArticle}[1]{\iftoggle{article}{#1}{}} 
\newcommand{\inThesis}[1]{\iftoggle{article}{}{#1}}  

\newcommand{\sectionInShort}[1]{
	\inThesis{\subsection{#1}}
	\inArticle{\section{#1}}
}

\newcommand{\subSectionInShort}[1]{
	\inThesis{\subsubsection{#1}}
	\inArticle{\paragraph{#1}}
}


\documentclass[11pt]{paper}
\usepackage{geometry}                
\geometry{letterpaper}                   
\usepackage{graphicx}
\usepackage{amssymb}
\usepackage{epstopdf}
\DeclareGraphicsRule{.tif}{png}{.png}{`convert #1 `dirname #1`/`basename #1 .tif`.png}

\usepackage{amsthm}
\usepackage{amssymb} 
\usepackage[cmex10]{amsmath}

\usepackage{accents}

\newcommand{\ubar}[1]{\underaccent{\bar}{#1}}
 
\newtheorem{theorem}{Theorem}[section]
\newtheorem{lemma}[theorem]{Lemma}

\newtheorem{corollary}[theorem]{Corollary}

\newcommand{\argmax}{\operatornamewithlimits{argmax}}

\renewcommand{\subsection}{\section}

\usepackage[ruled, vlined]{algorithm2e}

\title{Fast Non-Monotone Submodular Maximisation Subject to a Matroid Constraint}
\author{Pau Segui-Gasco
\thanks{p.seguigasco@cranfield.ac.uk}
\thanks{Centre for Autonomous and Cyber-Physical Systems, Institute of Aerospace Sciences, School of Aerospace Transport and Manufacturing, Cranfield University, Bedfordshire, UK.} 
\and Hyo-Sang Shin \footnotemark[2]
}


\begin{document}
\maketitle

\abstract{In this work we present the first practical $\left(\frac{1}{e}-\epsilon\right)$-approximation algorithm to maximise a general non-negative submodular function subject to a matroid constraint. Our algorithm is based on combining the decreasing-threshold procedure of Badanidiyuru and Vondrak (SODA 2014) with a smoother version of the measured continuous greedy algorithm of Feldman et al. (FOCS 2011). This enables us to obtain an algorithm that requires $O(\frac{nr^2}{\epsilon^4} \big(\frac{\bar{d}+\ubar{d}}{\bar{d}}\big)^2 \log^2({\frac{n}{\epsilon}}))$ value oracle calls, where $n$ is the cardinality of the ground set, $r$ is the matroid rank, and $\ubar{d}, \bar{d} \in \mathbb{R}^+$ are the absolute values of the minimum and maximum marginal values that the function $f$ can take i.e.: $ -\ubar{d} \leq f_S(i) \leq \bar{d}$, for all $i\in E$ and $S\subseteq E$, where $E$ is the ground set. The additional value oracle calls with respect to the work of Badanidiyuru and Vondrak come from the greater spread in the sampling of the multilinear extension that the possibility of negative marginal values introduce.}

\



\inArticle{

\sectionInShort{Introduction}


Submodular maximisation problems have drawn a lot of attention recently \cite{krause2012submax}. This interest is due to a good confluence of theoretical results and practical applicability. Intuitively, a submodular set function is said to be so because it exhibits diminishing marginal returns, i.e.: the marginal value that an element adds to a set decreases as the size of the set increases. This simple property arises naturally in many applications and is what enables good theoretical tractability.  It has been used in a variety of application domains, to name but a few: markets \cite{lehmann2001comb, dughmi2009revenue}, influence in networks \cite{kempe2003social}, document summarisation \cite{lin2010multi}, and sensor placement \cite{krause2008efficient, leskovec2007cost}. In particular, this paper focuses on matroid-constrained submodular maximization of a general non-negative submodular function.

Matroids are an incredibly powerful abstraction of independence. They capture seemingly disconnected notions such as linear independence, forests in graphs, traversals, among many others. Interestingly, with linear sum objectives (modular function), they are inextricably linked with the greedy algorithm. If the greedy algorithm is optimal, then there is an implicit matroid; if there is a matroid then the greedy algorithm is optimal \cite{Schrijver2003}. They provide a flexible framework to characterise a variety of relevant constraints such as: partitions, schedules, cardinality, or even rigidity.

Importantly, the combination of a submodular function and a matroid constraint plays a unifying role for many well-known combinatorial optimisation problems such as: Submodular Welfare (also known as Submodular Task Allocation), Max $k$-Cover, Max Generalised Assignment, Max Facility Location, and Constrained Max Cut (e.g. Max Bisection) among others.


However, maximising a submodular function subject to a matroid constraint is NP-Hard. Hence much of the research effort has focused on obtaining good approximation algorithms. A classic result by Nemhauser, Wolsey, and Fisher \cite{Nemhauser1978} shows that the Greedy Algorithm is a $\frac{1}{2}$-approximation for non-negative monotone submodular functions. More recently, a fruitful avenue of research was spurred by Vondrak \cite{Vondrak2008, Calinescu2011} by showing that solving a relaxation of the problem based on the multilinear extension using the Continuous Greedy Algorithm and then rounding the result yields good approximation algorithms. They present their result in the context of non-negative monotone submodular functions to yield a $ (1-\frac{1}{e})$-approximation. Shortly after, Feldman et al. \cite{Feldman2011} modified Vondrak's algorithm to develop the Measured Continuous Greedy Algorithm which supported both non-negative non-monotone and monotone submodular functions. Their algorithm was the first to achieve a $\frac{1}{e}$-approximation for general non-negative submodular functions subject to a matroid constraint. Both continuous greedy algorithms can find the aforementioned approximation bounds for solving relaxation problems subject to the more general constraint class of down-closed solvable polytopes. An important breakthrough in this field are Contention Resolution Schemes, a rounding framework proposed in \cite{vondrak2011submodular}, because they provide a paradigm for developing rounding schemes for a combination of useful constraints including matroids and knapsacks. Thus the combination of continuous greedy relaxations and Contention Resolution schemes enabled approximation algorithms for many important submodular maximization problems.

However, continuous greedy algorithms are impractically slow. Vondrak's algorithm required  $\Theta(n^8)$ value oracle calls, while Feldman's Measured Continuous Greedy requires $O(n^6)$ assuming oracle access to the multilinear extension, which needs to be sampled in general, creating additional overhead, possibly around $O(n^3)$ or $O(n^2)$. 

To remedy this, Badanidiyuru and Vondrak \cite{FastVondrak} proposed an efficient $ \left(1-\frac{1}{e}- \epsilon \right)$-approximation algorithm that uses $O\left(\frac{nr}{\epsilon^4}\log^2{\frac{n}{\epsilon}}\right)$ value oracle calls, for non-negative monotone submodular function maximisation subject to a matroid constraint. Where $n$ is the cardinality of the ground set, and $r$ is the rank of the matroid. They achieve this by using a Decreasing-Threshold procedure that enables a reduction of both the number of steps and the number of samples needed at each step of the continuous greedy process.

In the inapproximability front, for matroid constrained non-negative monotone submodular maximisation Feige \cite{feige1998threshold} showed that improving over the $1-\frac{1}{e}$ threshold is NP-Hard, and so the continuous greedy guarantees are tight. In the non-monotone case, \cite{gharan2011submodular} showed that no polynomial time algorithm can achieve an approximation better than 0.478. Closing the gap between $\frac{1}{e}$ and 0.478 remains an important open problem where recent advances have been made: Ene and Nguyen in \cite{ene2016constrained} give a 0.372-approximation, while  Feldman et al in \cite{buchbinder2016constrained} improve it to a $0.385$-approximation. This improvements are relatively small, recall that $\frac{1}{e} \approx 0.368$, but show that there is room for future improvement. Both algorithms are based on the measured continuous greedy, and would therefore benefit from the techniques presented here.

Our contribution in this work is the first $\left(\frac{1}{e}-\epsilon \right)$-approximation algorithm to maximise a general non-negative submodular function subject to a matroid constraint with a practical time complexity. Our algorithm is based on combining the Decreasing-Threshold procedure of \cite{FastVondrak} with a smoother version of the measured continuous greedy algorithm of \cite{Feldman2011}. This allows us to obtain an algorithm that requires $O(\frac{nr^2}{\epsilon^4} \big(\frac{\bar{d}+\ubar{d}}{\bar{d}}\big)^2 \log^2({\frac{n}{\epsilon}}))$ value oracle calls, where $\ubar{d}, \bar{d} \in \mathbb{R}^+$ are the absolute values of the minimum and maximum marginal values that the function $f$ can take, i.e. $ -\ubar{d} \leq f_S(i) \leq \bar{d}$, for all $i\in E$ and $S\subseteq E$. The additional oracle calls with respect to \cite{FastVondrak} come from the additional spread in the sampling of the multilinear extension that the possibility of negative marginal values introduces.

\subSectionInShort{Definitions}
Let us now introduce the definitions of a matroid and a submodular function.
A function $f:2^E \rightarrow \mathbb{R^+}$ on a set $E$ is said to be submodular if:
\begin{equation}
f(A)+f(B)\geq f(A\cup B) + f(A\cap B), \textit{ for all } A,B \subseteq E. 
\end{equation}
A more intuitive, but equivalent, definition can be formulated in terms of the marginal value added by an element: given $Y,X\subseteq E$ satisfying $X\subseteq Y$ and $x\in E \setminus Y$, then $f(X+x)-f(X)\geq f(Y + x)-f(Y)$. Herein we overload the symbol $+$ ($-$) to use it as shorthand notation for the addition (subtraction) of an element to a set, i.e. $S + i = S \cup \{i\}$ ($S - i = S \setminus \{i\}$). A matroid can be defined as follows \cite{Schrijver2003}: a pair $\mathcal{M} = (E,\mathcal{I})$ is called a matroid if $E$ is a finite set and $\mathcal{I}$ is a nonempty collection of subsets of E satisfying:  $\emptyset \in\mathcal{I}$; if $A\in\mathcal{I}$ and $B\subseteq A$, then $B\in\mathcal{I}$; and if $A,B\in\mathcal{I}$ and $|A|<|B|$, then $A + z \in \mathcal{I}$ for some $z \in B \setminus A$.

A matroid base $B \subseteq E$ is a maximally independent set $B \in \mathcal{I}$, that becomes dependent by adding an additional element $e\in E\setminus B$, i.e. $B+e \notin \mathcal{I}$. A key property of matroids is that we can exchange elements between bases, let $\mathcal{B}$ denote the set of all bases of a matroid $\mathcal{M}$, then the exchange property can be is formalised as follows:
\begin{lemma} (Corollary 39.12A in \cite{Schrijver2003}) 
Let $\mathcal{M} = (N,\mathcal{I}) $ be a matroid, and $B_1,B_2\in \mathcal{B}$ be two bases. Then there is a bijection $\phi : B_1 \rightarrow B_2$ such that for every $b\in B_1$ we have $B_1-b + \phi(b) \in \mathcal{B}$.
\label{basebijection}
\end{lemma}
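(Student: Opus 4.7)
The plan is to reduce the existence of the bijection to a perfect matching problem in a bipartite graph and then invoke Hall's marriage theorem. Observe that since $B_1$ and $B_2$ are bases, $|B_1| = |B_2|$, and we also have $|B_1 \setminus B_2| = |B_2 \setminus B_1|$. I would define $\phi$ to be the identity on $B_1 \cap B_2$ (which trivially satisfies $B_1 - b + \phi(b) = B_1 \in \mathcal{B}$), so the task reduces to constructing a bijection $\phi : B_1 \setminus B_2 \to B_2 \setminus B_1$ such that $B_1 - b + \phi(b) \in \mathcal{B}$ for every $b \in B_1 \setminus B_2$.

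To this end, I would build a bipartite graph $G$ with parts $B_1 \setminus B_2$ and $B_2 \setminus B_1$, placing an edge between $b$ and $c$ precisely when $B_1 - b + c \in \mathcal{B}$. A perfect matching in $G$ is exactly the desired bijection, so the goal becomes showing that Hall's condition holds: for every $S \subseteq B_1 \setminus B_2$, the neighborhood $N(S) \subseteq B_2 \setminus B_1$ satisfies $|N(S)| \geq |S|$.

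The main work, and the step I expect to be the main obstacle, is the verification of Hall's condition. To handle it, I would consider the set $T = (B_1 \setminus S) \cup (B_2 \setminus B_1)$. Since $S \cap B_2 = \emptyset$, the intersection $B_1 \cap B_2$ lies in $B_1 \setminus S$, so $B_2 \subseteq T$, and hence $T$ contains a base. The independent set $B_1 \setminus S$ can then be extended, by the augmentation axiom, to a base $B'$ using $|S|$ elements $c_1,\dots,c_{|S|}$ drawn necessarily from $B_2 \setminus B_1$. For each such $c_i$, the unique circuit $C_i \subseteq B_1 + c_i$ must intersect $S$: otherwise $C_i \cap B_1 \subseteq B_1 \setminus S$, contradicting that $(B_1 \setminus S) + c_i$ is independent in $B'$. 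Picking any $b \in C_i \cap S$, the standard circuit-exchange property gives $B_1 - b + c_i \in \mathcal{B}$, i.e.\ $c_i \in N(S)$. Hence $\{c_1,\dots,c_{|S|}\} \subseteq N(S)$, which yields $|N(S)| \geq |S|$.

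With Hall's condition established, Hall's marriage theorem produces a perfect matching in $G$, and gluing this matching with the identity on $B_1 \cap B_2$ yields the required bijection $\phi : B_1 \to B_2$. The only delicate point is the circuit argument that forces each newly added $c_i$ to be adjacent to some element of $S$; the rest is essentially bookkeeping and applications of standard matroid axioms.
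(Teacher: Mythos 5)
The paper does not actually prove this lemma; it simply cites Corollary~39.12A of Schrijver and moves on, so there is no in-paper argument to compare against line by line. Your blind proof is correct and complete, and it is essentially the classical Hall's-theorem proof of the bijective basis-exchange property (sometimes attributed to Brualdi). The reduction to the disjoint case by letting $\phi$ be the identity on $B_1\cap B_2$ is sound, since $B_1-b+b=B_1\in\mathcal{B}$. The Hall-condition check is also correct: because $S\cap B_2=\emptyset$ we have $B_2\subseteq T=(B_1\setminus S)\cup(B_2\setminus B_1)$, so $T$ has full rank and $B_1\setminus S$ extends to a base $B'\subseteq T$ by adding $|S|$ elements $c_1,\dots,c_{|S|}\in B_2\setminus B_1$; for each $c_i$ the fundamental circuit of $c_i$ with respect to $B_1$ must meet $S$ (otherwise it would sit inside the independent set $(B_1\setminus S)+c_i\subseteq B'$), and any $b$ in that intersection gives $B_1-b+c_i\in\mathcal{B}$, so $c_i\in N(S)$. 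Since $|B_1\setminus B_2|=|B_2\setminus B_1|$, Hall's theorem yields a perfect matching, which together with the identity on $B_1\cap B_2$ is exactly the required bijection. Your proof is more self-contained than the paper, which relies on the reader chasing the citation; the only mild stylistic point is that you could augment $B_1\setminus S$ directly against $B_2$ rather than introducing the auxiliary set $T$, but both routes are valid.
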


Now we can formally define our target problem:\\
\textit{Given a ground set $E$, a matroid defined upon it $\mathcal{M} = ($E$,\mathcal{I})$, and a general non-negative submodular function $f:2^{E} \rightarrow \mathbb{R}^{+}$, find:}
\begin{equation}
\max_{S\in\mathcal{I}} f(S).
\end{equation}

A common approach to solve combinatorial optimisation problems is by using relaxation and rounding. This involves solving a continuous problem, in which we allow fractional solutions, and a rounding procedure which transforms the fractional output back to a discrete solution. Now let us explain the relaxation that is used in continuous greedy algorithms. There are two key elements needed to define the relaxation: the domain that contains fractional solutions, and a function to evaluate fractional solutions. In our problem, the most natural representation for the domain is the matroid polytope $P(\mathcal{M})$, which is the convex hull of the incidence vectors of the independent sets in the matroid, i.e.: $P(\mathcal{M})= \text{Conv}(\{\mathbf{1}_S: \forall S \in \mathcal{I}\})$ \cite{Schrijver2003}. Here $\mathbf{1}_S$ denotes the incidence vector of a set $S\subseteq E$, which contains a 1 for each element in $S$ and 0 elsewhere, thus $\mathbf{1}_S\in[0,1]^{E}$. The function for evaluation of fractional solutions is the multilinear extension: given a set $E$ and a submodular set function $f: 2^E \rightarrow \mathbb{R}^{+}$, its multilinear extension $F: [0,1]^E \rightarrow \mathbb{R}^{+}$ is defined as:

\begin{equation}
F(\mathbf{y})\triangleq\mathbb{E}[f(R(\mathbf{y}))]=\sum_{\mathcal{S}\subseteq E}{f(\mathcal{S})\prod_{i\in \mathcal{S}}y_i}\prod_{j\notin \mathcal{S}}(1-y_j),
\end{equation}
where $R(\mathbf{y})$ is a random set containing each element $i\in E$ with probability $y_i$. Additionally, $F$ can be lower bounded in terms of $f$ :
\begin{lemma} (from \cite{Feldman2011})
Let $f:2^E\rightarrow \mathbb{R}^{+}$ be a submodular function; let $F:[0,1]^E\rightarrow \mathbb{R}^{+}$ be the multilinear extension of $f$; and let $\mathbf{y}\in[0,1]^E$ be a vector whose components are bounded by $a$, i.e. $y_i \leq a,  \forall i \in E$. Then, for every $S \subseteq{E}$, we have that 
\begin{equation}
F(\mathbf{y} \vee \mathbf{1}_S) \geq (1-a)f(S).
\end{equation}
\label{boundF}
\end{lemma}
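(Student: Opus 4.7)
The plan is to reduce the inequality to a cleaner form on the ``contracted'' function and then exploit concavity of the multilinear extension along non-negative directions. First, I would set $g:2^{E\setminus S}\to\mathbb{R}^{+}$ by $g(W)\triangleq f(S\cup W)$; this contraction of $f$ by $S$ is still non-negative and submodular, with $g(\emptyset)=f(S)$. Unpacking the definition of the multilinear extension and using that the $S$-coordinates of $\mathbf{y}\vee\mathbf{1}_S$ are all $1$ shows $F(\mathbf{y}\vee\mathbf{1}_S)=G(\mathbf{p})$, where $G$ is the multilinear extension of $g$ and $\mathbf{p}$ is the restriction of $\mathbf{y}$ to $E\setminus S$. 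The lemma is thus equivalent to proving $G(\mathbf{p})\geq(1-a)\,g(\emptyset)$ under the hypothesis $p_i\leq a$ for every $i\in E\setminus S$.

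The key technical step is to introduce the one-dimensional auxiliary function $h(t)\triangleq G\bigl(t\,\mathbf{p}/a\bigr)$ on $t\in[0,1]$. The rescaling by $1/a$ is chosen precisely so that $t\,\mathbf{p}/a$ lies in $[0,1]^{E\setminus S}$ throughout. I would then argue that $h$ is concave, using the standard fact that submodularity of $g$ implies $\partial^{2}G/\partial x_i\partial x_j\leq 0$ for every $i\neq j$ (marginal returns decrease along any other coordinate), while multilinearity gives $\partial^{2}G/\partial x_i^{2}=0$. A direct second-derivative calculation then yields $h''(t)=\sum_{i\neq j}(p_ip_j/a^{2})\,\partial^{2}G/\partial x_i\partial x_j\leq 0$, so $h$ is concave on $[0,1]$.

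To finish, I would evaluate concavity at the convex combination $a=(1-a)\cdot 0+a\cdot 1$ to obtain $h(a)\geq(1-a)\,h(0)+a\,h(1)$. Non-negativity of $g$ gives $G\geq 0$, hence $h(1)=G(\mathbf{p}/a)\geq 0$; combined with $h(0)=G(\mathbf{0})=g(\emptyset)$, this yields $F(\mathbf{y}\vee\mathbf{1}_S)=h(a)\geq(1-a)\,g(\emptyset)=(1-a)\,f(S)$. The only substantive ingredient is the concavity of $G$ along non-negative directions, a structural property of multilinear extensions of submodular functions; the reduction to $g$ and the scaling by $1/a$ are what package the argument so that a single application of concavity suffices. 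If one instead tried the natural inductive proof that peels off one coordinate at a time and applies the one-dimensional bound, one loses a factor $(1-a)$ per coordinate, so the real subtlety here is organising the argument to use concavity \emph{once}.
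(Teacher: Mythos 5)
The paper does not actually prove this lemma---it is stated with the attribution ``(from \cite{Feldman2011})'' and used as a black box---so there is no in-paper proof to compare against. Your argument, however, is correct and complete modulo one trivial edge case. The contraction $g(W)=f(S\cup W)$ is non-negative and submodular, and the identity $F(\mathbf{y}\vee\mathbf{1}_S)=G(\mathbf{p})$ (where $\mathbf{p}$ is $\mathbf{y}$ restricted to $E\setminus S$) holds because the coordinates in $S$ are forced to $1$ in the random set $R(\mathbf{y}\vee\mathbf{1}_S)$. The rescaling $h(t)=G(t\mathbf{p}/a)$ is exactly what is needed for $t=1$ to land inside $[0,1]^{E\setminus S}$, and concavity of $h$ follows from $\partial^2 G/\partial x_i\partial x_j\le 0$ for $i\ne j$ (submodularity), $\partial^2 G/\partial x_i^2=0$ (multilinearity), and $p_i\ge 0$; then $h(a)\ge(1-a)h(0)+a\,h(1)\ge(1-a)g(\emptyset)=(1-a)f(S)$ since $h(1)=G(\mathbf{p}/a)\ge 0$. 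You should dispose of $a=0$ separately (it forces $\mathbf{y}=\mathbf{0}$, and the claim becomes $F(\mathbf{1}_S)=f(S)\ge f(S)$), but that is immediate.

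Where your route differs from the cited source: Feldman, Naor and Schwartz obtain this bound as a direct corollary of a probabilistic lemma of Feige, Mirrokni and Vondr\'ak---for a non-negative submodular $g$ and a random set $R$ in which each element appears with probability at most $a$ (not necessarily independently), $\mathbb{E}[g(R)]\ge(1-a)g(\emptyset)$---applied to the contraction $g$. Your proof instead packages the argument as a single application of concavity of the multilinear extension along non-negative directions, a structural fact already standard in this literature (e.g.\ Calinescu, Chekuri, P\'al and Vondr\'ak). The trade-off is that your argument is more self-contained given the machinery the paper already uses, but it only yields the ``product distribution'' version of the statement, which is all that is needed here; the Feige--Mirrokni--Vondr\'ak route proves the stronger correlated version. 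Both are legitimate derivations, and your observation that a single well-placed use of concavity beats the na\"ive coordinate-by-coordinate peeling (which would compound $(1-a)$ factors) is exactly the right insight.
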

A key magnitude used in the Measured Continuous Greedy Algorithm is the marginal value of an element $e\in E$ given a fractional solution $\mathbf{y} \in P(\mathcal{M})$, we refer to it by $\Delta F_{e}(\mathbf{y})$, defined as follows:
\begin{equation}
\Delta F_e(\mathbf{y}) \triangleq \mathbb{E}[f_{R({\mathbf{y}})}(e)]= F(\mathbf{y} \vee \mathbf{1}_e) - F(\mathbf{y}).
\end{equation}

In other words, it is the value that would be created by adding the full element $i$ to the fractional solution $\mathbf{y}$. Here we have used the common shorthand notation for the marginal value of an element $f_S(e) = f(S+e)-f(S)$.

In general, given an arbitrary non-negative submodular function there is no closed form of multilinear extension that enables us to evaluate it efficiently. The usual way to deal with this is to sample it, and so we need the following concentration inequality that enables us to bound the sampling error:

\begin{lemma} (Hoeffding Bound, Theorem 2 in \cite{Hoeffding}) 
Let $X_1, ..., X_m$ be independent random variables such that for each $i$, $a \leq X_i \leq b $, with $a,b \in \mathbb{R}$. Let $\tilde{X}=\frac{1}{m}\sum_{i=1}^{m} X_i$. Then
\begin{equation*}
\Pr[|\tilde{X} - \mathbb{E}(X)| > t ]\leq 2e^{-\frac{ 2  t^2}{(b-a)^2}m} .
\end{equation*}
\label{hoeffding}
\end{lemma}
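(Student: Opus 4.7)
The plan is to prove this by the classical Chernoff-style argument applied coordinate-wise, first handling the upper tail $\Pr[\tilde{X}-\mathbb{E}[\tilde{X}]>t]$ and then deriving the lower tail by symmetry, combining the two via a union bound to explain the factor of $2$.

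For the upper tail, I would start with the exponential Markov inequality: for any $\lambda>0$,
\begin{equation*}
\Pr\bigl[\tilde{X}-\mathbb{E}[\tilde{X}]>t\bigr] \;\leq\; e^{-\lambda t}\,\mathbb{E}\bigl[e^{\lambda(\tilde{X}-\mathbb{E}[\tilde{X}])}\bigr].
\end{equation*}
Writing $\tilde{X}-\mathbb{E}[\tilde{X}]=\tfrac{1}{m}\sum_{i=1}^m (X_i-\mathbb{E}[X_i])$ and using independence of the $X_i$, the expectation on the right factors as $\prod_{i=1}^m \mathbb{E}[e^{(\lambda/m)(X_i-\mathbb{E}[X_i])}]$. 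Each centred variable $X_i-\mathbb{E}[X_i]$ is bounded within an interval of width at most $b-a$, so I would invoke Hoeffding's lemma (for a zero-mean random variable $Y$ with $Y\in[a',b']$, $\mathbb{E}[e^{sY}]\leq e^{s^2(b'-a')^2/8}$) to bound each factor by $e^{\lambda^2(b-a)^2/(8m^2)}$, yielding the overall bound $e^{-\lambda t + \lambda^2(b-a)^2/(8m)}$.

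Next, I would optimise over $\lambda>0$. The exponent is a convex quadratic in $\lambda$ minimised at $\lambda^\star = 4mt/(b-a)^2$, which substitutes to give the one-sided bound $e^{-2mt^2/(b-a)^2}$. Applying exactly the same reasoning to the sequence $-X_1,\dots,-X_m$ handles the lower tail with the same bound, and a union bound over the two events yields the factor of $2$ in the stated inequality.

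The main obstacle is establishing Hoeffding's lemma itself, since everything downstream is mechanical. The standard route is to exploit convexity of $s\mapsto e^{sy}$ to reduce the worst case to a two-point distribution supported on $\{a',b'\}$, then bound $\log\mathbb{E}[e^{sY}]$ by a Taylor expansion of its second derivative in $s$; I would quote this as a known subresult rather than re-derive it, since the statement being proved is itself cited to Hoeffding's original paper.
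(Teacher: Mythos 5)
The paper does not supply its own proof of this lemma; it is quoted verbatim as Theorem~2 of Hoeffding's 1963 paper, which the bibliography cites as \cite{Hoeffding}, and is used as a black-box tool. Your derivation is the standard Chernoff--Hoeffding argument (exponential Markov, factoring the moment generating function by independence, Hoeffding's lemma $\mathbb{E}[e^{sY}]\leq e^{s^2(b'-a')^2/8}$ for a bounded zero-mean $Y$, optimising the quadratic exponent at $\lambda^\star=4mt/(b-a)^2$, and a union bound for the two tails), and it is correct; the arithmetic checks out, giving exactly $e^{-2mt^2/(b-a)^2}$ on each side and hence the factor of $2$. This is essentially Hoeffding's original route, so nothing in the paper contradicts or improves on what you wrote; the only cosmetic remark is that the $\mathbb{E}(X)$ appearing in the lemma statement should be read as $\mathbb{E}(\tilde{X})$, as your proof implicitly and correctly does.
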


In this work we shall not dwell on the rounding step. Suffice it to say that there exist algorithms, such as Contention Resolution Schemes \cite{Chekuri2014}, Pipage-Rounding \cite{Ageev2004, Calinescu2011, vondrak2013symmetry}, or Swap-Rounding \cite{vondrak2011submodular}, that given a general non-negative submodular function $f$, its multilinear extension $F$, and a point $\mathbf{y}\in P(\mathcal{M})$, find a set $S\in\mathcal{I}$, such that $f(S)\geq F(\mathbf{y})$. The most efficient technique is Swap-Rounding, and its results are used in \cite{FastVondrak}. However, the swap-rounding results in \cite{vondrak2011submodular} hinge around Chernoff-like concentration bounds for monotone submodular functions, but in light of the concentration bounds for non-monotone submodular functions in \cite{Vondrak2010}, we believe that this technique can be adapted to work with non-monotone submodular functions.

}

\sectionInShort{Algorithm}


\begin{algorithm}
	\SetKwInOut{Input}{Input}
	\SetKwInOut{Output}{Output}
	\SetKwInOut{Return}{Return}
	\SetKwFunction{DecreasingThreshold}{Decreasing-Threshold}
	\DontPrintSemicolon
	
	\Input{  $f:2^{E} \rightarrow \mathbb{R}^{+}$,  $\epsilon \in [0,1]$, $\mathcal{I} \subseteq 2^E$.} 
	\Output{ A point $\mathbf{y}\in P(\mathcal{M})$, such that $F(\mathbf{y})\geq (\frac{1}{e}-2\epsilon)f(OPT)$.}
	\BlankLine
	\tcp{Initialisation}
    $\mathbf{y}(0)\gets \mathbf{0}$\;
    $\delta \gets \epsilon$\;
    \BlankLine
    \tcp{Main Loop}
   \For{$t = \{0,\delta,2\delta,3\delta,\dots, 1-\delta\}$}{
		$B(t) \gets \DecreasingThreshold{$f,\mathbf{y}(t),\epsilon, \delta, \mathcal{I}$}$\;
		\For{$i\in E$}{
			\eIf{$i\in B(t)$}{
				 $y_i(t+\delta) \gets  1 + e^{- \delta}(y_i(t)-1)$\;
			}{ 
				 $y_i(t+\delta) \gets  y_i(t)$\;
			}
		}
   }
   \Return{ $\mathbf{y}(1)$ }
\caption{Accelerated Measured Continuous Greedy}
\label{algoThresholdMeasured}
\end{algorithm}

\begin{algorithm}
\caption{Decreasing-Threshold}
	\SetKwInOut{Input}{Input}
	\SetKwInOut{Output}{Output}
	\SetKwInOut{Return}{Return}
	
	\Input{ $f:2^{E} \rightarrow \mathbb{R}^{+}$,  $\mathbf{y} \in [0,1]^E$, $\epsilon \in [0,1]$, $\delta \in [0,1]$, $\mathcal{I} 	\subseteq 2^E$.}
	\Output{A set $B\subseteq E$, such that $B\in\mathcal{I}$.}
	\BlankLine
	\tcp{Initialisation}
   $B \gets \emptyset$\;
   $\bar{d} \gets \max_{i\in E} f(i)$\;
   $\bar{y}' \gets \max_{i\in E}(1+e^{-\delta}(y_i-1))$\;
   \BlankLine
   \tcp{Main Loop}
   \For{($w = d$; $w\geq \epsilon \frac{\bar{d}}{r}(1-\bar{y}')$; $w \gets w(1-\epsilon)$)}{
   		\For{$e\in E$}{
      		$ w_{e}(B,\mathbf{y}) \gets \Delta F_e(\mathbf{y}(B,\delta))$, \\
      		\tcp{\small averaging  $O\bigg(\frac{r^2}{\epsilon^2}\big(\frac{\bar{d}+\ubar{d}}{\bar{d}}\big)^2\log(|E|)\bigg)$  iid random samples.}
      \If{$w_{e}(B,\mathbf{y})\geq w$ and $B+e \in \mathcal{I}$}{
      		$B \gets B + e$
      	}
		}

   }
   \Return{ $B$} 
   {\small*Note that the notation $\mathbf{y}(B,\delta)$ means $y_i(B,\delta)=y_i(t)$ for $i\notin B$, and $y_i(B,\delta)=1+e^{-\delta}(y_i-1)$ for $i\in B$.}
\label{algoThreshold}
\end{algorithm}


\inArticle{
The algorithm we present here is based on the Decreasing-Threshold procedure that enabled Badanidiyuru and Vondrak \cite{FastVondrak} to achieve a very efficient algorithm, $O(\frac{nr}{\epsilon^4}\log^2{\frac{n}{\epsilon}})$, for maximising a non-negative monotone submodular function subject to a matroid constraint. The continuous greedy algorithm finds an approximate solution to the relaxation, $\mathbf{y} \in P(\mathcal{M})$, by integrating between $t=0$ to $t=1$ the differential equation $\frac{d\mathbf{y}}{dt} = \argmax_{v \in P(\mathcal{M})} \sum_{i\in E}\Delta F_i(\mathbf{y})v_i$. The algorithm in \cite{FastVondrak} is basically the standard continuous greedy algorithm, but uses a much more efficient Decreasing-Threshold procedure to find the maximum marginal improvement direction, i.e. $v_i \, \forall i \in E$ in the equation above, in each iteration. To enable an approximation algorithm for the non-monotone case Feldman \cite{Feldman2011} proposed the measured continuous greedy algorithm, which similarly integrates $\frac{d\mathbf{y}(t)}{dt} = (1-\mathbf{y}(t))\odot\argmax_{v \in P(\mathcal{M})} \sum_{i\in E}\Delta F_i(\mathbf{y})v_i$ ($\odot$ represents element by element multiplication here).

The advantage of the Decreasing-Threshold procedure is that it requires a smaller number of integration steps and a smaller number of samples per step. Here we apply its key idea to the measured continuous greedy. However, we need to perform two crucial modifications. The first modification is a smoother version of the integration rule in the measured continuous greedy algorithm of Feldman et al. \cite{Feldman2011}. We use an update step that makes the algorithm more continuous-like. Instead of using the integration step $y(t+\delta)= y(t) + \delta(1-y(t))$ as proposed by \cite{Feldman2011}, we increment the solution at a rate of $\frac{dy(t)}{dt} = 1-y(t)$ by using the integration rule: $y(t+\delta) =  1 + e^{- \delta}(y(t)-1)$, see Algorithm \ref{algoThresholdMeasured}. This allows us to smooth the tradeoff between error and running time. The second modification is in the sampling error up to which we estimate the marginal values of the multilinear extension in the Decreasing-Threshold procedure. Badanidiyuru and Vondrak \cite{FastVondrak} used an additive and multiplicative error bound which enabled them to use only $O(\frac{1}{\epsilon^2}r\log(n))$ samples. However, since our algorithm is meant to work with non-monotone functions, we may have negative marginal values, this increases the number of samples required for two reasons: first, it prevents us from using a multiplicative-additive error bound; and second the marginal values, when sampled, have inherently more spread. To quantify our error we use Hoeffding's concentration inequality, which forces us to use $O\big(\frac{r^2}{\epsilon^2}\big(\frac{\bar{d}+\ubar{d}}{\bar{d}}\big)^2\log(n)\big)$, where $\ubar{d}, \bar{d} \in \mathbb{R}^+$ are the absolute values of the minimum and maximum marginal values that the function $f$ can take, i.e. $ -\ubar{d} \leq f_S(i) \leq \bar{d}$, for all $i\in E$ and $S\subseteq E$. (Recall that if $f$ were to be monotone we would have $\ubar{d}=0$.) The resulting Decreasing-Threshold procedure is presented in Algorithm \ref{algoThreshold}, and it takes an additional $O(r \big(\frac{\bar{d}+\ubar{d}}{\bar{d}}\big)^2)$ value oracle calls. Using only an additive bound instead of an additive and multiplicative introduces an extra $O(r)$ factor in the number of samples required per evaluation of the multilinear extension. While we also incur the additional $O\big(\big(\frac{\bar{d}+\ubar{d}}{\bar{d}}\big)^2\big)$ term to cope with the larger spread that is introduced by the possibility of negative marginal values. This results in an algorithm that finds an $\frac{1}{e} - \epsilon$ approximation with a number of value oracle calls of $O(\frac{nr^2}{\epsilon^4} \big(\frac{\bar{d}+\ubar{d}}{\bar{d}}\big)^2 \log^2({\frac{n}{\epsilon}}))$. 
It is not as efficient as the algorithm for the monotone case, but to the best of our knowledge, this constitutes the first practical $\frac{1}{e}-\epsilon$ approximation algorithm for general non-negative submodular function subject to a matroid constraint. }

\sectionInShort{Algorithm Analysis}

We split the analysis of the accelerated measured continuous greedy algorithm in three parts: first we show that the solution produced is feasible, i.e. $\mathbf{y}(1) \in P(\mathcal{M})$; then we prove the $\frac{1}{e}-\epsilon$ approximation; and finally we study its running time.

\subSectionInShort{Feasibility}

\begin{theorem}
The accelerated measured continuous greedy algorithm produces a feasible fractional solution, i.e., $\mathbf{y}(1) \in P(\mathcal{M})$.

\label{yinkDT}
\end{theorem}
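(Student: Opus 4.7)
The plan is to express $\mathbf{y}(1)$ as a vector dominated by a convex combination of indicator vectors of independent sets, and then invoke down-closedness of the matroid polytope. The key trick is to track $1-y_i$ instead of $y_i$, because the update rule becomes purely multiplicative: rewriting $y_i(t+\delta) = 1 + e^{-\delta}(y_i(t)-1)$ yields $1 - y_i(t+\delta) = e^{-\delta}(1-y_i(t))$ when $i\in B(t)$, and $1-y_i(t+\delta) = 1-y_i(t)$ otherwise.

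Starting from $y_i(0)=0$ and iterating over the $1/\delta$ time steps, I would conclude that
\begin{equation*}
1 - y_i(1) \;=\; e^{-\delta k_i}, \qquad k_i := \big|\{t\in\{0,\delta,\dots,1-\delta\} : i \in B(t)\}\big|.
\end{equation*}
Applying the elementary inequality $1 - e^{-x} \leq x$ for $x\geq 0$ then gives the componentwise bound $y_i(1) \leq \delta k_i = \sum_{t} \delta\cdot \mathbf{1}[i\in B(t)]$, i.e., $\mathbf{y}(1) \leq \sum_t \delta\, \mathbf{1}_{B(t)}$.

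Next, I would note that by the output specification of Algorithm \ref{algoThreshold}, each $B(t)$ is independent: the Decreasing-Threshold procedure only ever appends an element $e$ to $B$ when $B+e\in \mathcal{I}$, so $B(t)\in \mathcal{I}$ throughout. Since there are $1/\delta$ time steps each weighted by $\delta$, the right-hand side $\sum_t \delta\,\mathbf{1}_{B(t)}$ is a genuine convex combination of indicator vectors of independent sets and hence lies in $P(\mathcal{M})$.

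To close, I would invoke the standard fact that the matroid polytope is down-closed (equivalently, $P(\mathcal{M}) = \{x\geq 0 : x(S)\leq r(S) \text{ for all } S\subseteq E\}$): since $\mathbf{0} \leq \mathbf{y}(1) \leq \sum_t \delta\,\mathbf{1}_{B(t)} \in P(\mathcal{M})$, feasibility follows. The main obstacle is really only the bookkeeping in the first step: choosing to track $1-y_i$ rather than $y_i$ so that the recursion telescopes into a clean exponential, and then converting that exponential back to a linear upper bound via $1-e^{-x}\leq x$ so that the final vector can be recognised as being dominated by a convex combination in $P(\mathcal{M})$.
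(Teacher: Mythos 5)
Your proof is correct and follows essentially the same approach as the paper: both upper-bound $\mathbf{y}(1)$ coordinatewise by $\sum_t \delta\,\mathbf{1}_{B(t)}$, observe that this is a convex combination of indicator vectors of independent sets and hence lies in $P(\mathcal{M})$, and then invoke down-closedness of the matroid polytope. The only (cosmetic) difference is in the bookkeeping: you derive the closed form $1-y_i(1)=e^{-\delta k_i}$ and apply $1-e^{-x}\leq x$ once, whereas the paper bounds each per-step increment by $\delta$ and sums, but both yield the identical bound $y_i(1)\leq \delta k_i$.
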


\begin{proof}
We follow the approach used by \cite{Feldman2011}. We first define a vector $\mathbf{x}$ that coordinate wise upper-bounds $\mathbf{y}(1)$. Then, given that $P(\mathcal{M})$ is down-monotone, we only need to show that $\mathbf{x}$ is in $P(\mathcal{M})$ to show that $\mathbf{y}(1) \in P(\mathcal{M})$. Consider the vector $\mathbf{x}=\delta\sum_{l=0}^{\frac{1}{\delta}-1}  \mathbf{1}_{B(\mathbf{y}(l\delta))}$. This is a coordinate-wise upper bound of $\mathbf{y}(1)$ because when $i\in B$, we have that $y_i(t+\delta)-y_i(t) = 1 + e^{- \delta}(y_i(t)-1) -y_i(t) = (1-e^{-\delta})(1-y_i(t))\leq 1-e^{-\delta}\leq \delta$, for all $ \delta \in [0,1]$; and when $i\notin B$ $y_i(t+\delta)-y_i(t) = 0$. We now show that $\mathbf{x}$ is in $P(\mathcal{M})$. First, note that by definition $\mathbf{1}_B \in P(\mathcal{M})$. Then, observe that $\mathbf{x}/\delta$ is the sum of $\frac{1}{\delta}$ points in $P(\mathcal{M})$, thus $(\mathbf{x}/\delta)/(1/\delta)=\mathbf{x}$ is a convex combination of points in $P(\mathcal{M})$, hence $\mathbf{x}\in P(\mathcal{M})$, and consequently $\mathbf{y}(1) \in P(\mathcal{M})$.
 \end{proof}

\inArticle{
In fact, it is possible to find a point that still lies in $P(\mathcal{M})$ even with a $t>1$, specifically stopping at a value depending on a magnitude called the density of the matroid. This yields tighter approximation bounds that match those found for particular matroids, such as partition matroids. Asymptotically, however, these bounds are the same as those presented here. Therefore, in the aim of simplicity, the analysis is carried out with a stopping time of 1. The interested reader is referred to \cite{Feldman2011}.}

Now let us establish a bound to the coordinates of $\mathbf{y}(t)$ that will become useful later in the analysis of the approximation ratio.

\begin{lemma}
At time $0\leq t \leq1$, we have that:
\begin{equation}
y_{i}(t)\leq 1-e^{-t}, \quad \forall  i \in E.
\end{equation}
\label{boundY}
\end{lemma}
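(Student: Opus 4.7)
The plan is to prove the bound by induction on the discrete time steps $t \in \{0,\delta, 2\delta, \ldots, 1\}$ at which Algorithm~\ref{algoThresholdMeasured} updates $\mathbf{y}$. The base case $t=0$ is immediate, since the initialisation sets $y_i(0) = 0 = 1 - e^{0}$, matching the bound with equality.

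For the inductive step, I would assume $y_i(t) \leq 1 - e^{-t}$ and split into two cases according to whether $i \in B(t)$. If $i \notin B(t)$, the coordinate is unchanged, so $y_i(t+\delta) = y_i(t) \leq 1 - e^{-t} \leq 1 - e^{-(t+\delta)}$, using that $1-e^{-s}$ is nondecreasing in $s$. If $i \in B(t)$, the update rule can be rewritten as $y_i(t+\delta) = 1 - e^{-\delta}\bigl(1 - y_i(t)\bigr)$. The inductive hypothesis is equivalent to $1 - y_i(t) \geq e^{-t}$, and substituting this lower bound gives
\begin{equation*}
y_i(t+\delta) \leq 1 - e^{-\delta} \cdot e^{-t} = 1 - e^{-(t+\delta)},
\end{equation*}
which closes the induction.

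There is no real obstacle here: the update rule $y_i(t+\delta) = 1 + e^{-\delta}(y_i(t)-1)$ is designed precisely so that the slack $1 - y_i(t)$ is multiplied by $e^{-\delta}$ whenever $i$ is selected, which is exactly what is needed to propagate the bound. The only small observation worth spelling out is that the bound is stated for continuous $t \in [0,1]$, whereas the algorithm only defines $\mathbf{y}(t)$ at multiples of $\delta$; since those are the only values the bound needs to hold at (and any intermediate interpolation can only interpolate between values already satisfying the bound), the induction above suffices.
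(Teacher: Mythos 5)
Your proof is correct and follows essentially the same route as the paper: both rest on the observation that the update rule multiplies the slack $1-y_i$ by $e^{-\delta}$ whenever $i$ is selected and leaves it unchanged otherwise, so the ``always increment'' trajectory $1-e^{-n\delta}$ dominates. The paper phrases this by solving the recurrence $g(n)=1-e^{-n\delta}$ explicitly and then invoking domination, whereas you carry it out as a direct induction; the content is the same, and your version makes the case split and the monotonicity of the update in $y$ a bit more explicit.
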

\begin{proof}
Consider the recurrence $g(n+1)=1+e^{-\delta}(1-g(n))$, with $g(0)=0$. This recurrence has the following solution:  $g(n) = 1-e^{-n\delta}$. Now, in our algorithm $t$ is incremented linearly, so the number of iteration, $n$, and $t$ are related by $t=n\delta$. At  $t$, all the coordinates of $\mathbf{y}$ either, they stay constant, i.e. $y_i(t+\delta)=y_i(t)$, when $i\notin B$, or increase, i.e. $y_i(t+\delta) = 1 + e^{-\delta}(y_i(t) - 1)$, when $i\in B$. Hence, given that $g(n)$ is non-decreasing, the recurrence $g$ is an upper bound because it corresponds to incrementing in each and every iteration. Consequently, at $t$, $y(t) \leq g(\frac{t}{\delta})= 1-e^{-t}$.
\end{proof}

\subSectionInShort{Approximation Ratio}

First we show the gain that the algorithm makes in a single step, and then use this to build a recurrence relation that yields the approximation ratio. But before we do that, we bound the error introduced by sampling:

\begin{corollary} 
Given a non-negative submodular function $f: 2^E\rightarrow \mathbb{R}^+$, and a point $\mathbf{y}\in[0,1]^E$;
let $\ubar{d}, \bar{d} \in \mathbb{R}^+ $ be the minimum and maximum marginal values of $f$, such that $-\ubar{d} \leq f_S(j) \leq \bar{d} $ for all $S\subseteq E$ and $j\in E$; let $R_1, R_2, ..., R_m$ be iid samples drawn from $R(\mathbf{y})$, let $w_j(\mathbf{y}) = \frac{1}{m} \sum_{i=1}^{m} f_{R_i}(j)$; and let $f(OPT)=\max_{S\in \mathcal{I}}f(S)$. Then,
\begin{equation*}
\begin{aligned}
\Pr(|w_j(\mathbf{y})- \Delta F_j(\mathbf{y})| \geq \beta f(OPT) ) \leq 2e^{-2m\beta^2\big(\frac{\bar{d}}{\bar{d }+ \ubar{d}}\big)^2}.
\end{aligned}
\end{equation*}
\label{hoeffdingNonMon}
\end{corollary}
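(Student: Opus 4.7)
The plan is to treat this as a direct application of the Hoeffding bound (Lemma \ref{hoeffding}), followed by a one-line slackening of the exponent using a lower bound on $f(OPT)$. First I would set $X_i := f_{R_i}(j)$ for $i=1,\ldots,m$. Since $R_1,\ldots,R_m$ are iid copies of $R(\mathbf{y})$, so are the $X_i$. The hypothesis $-\ubar{d}\leq f_S(j)\leq \bar{d}$ places each realisation of $X_i$ in the interval $[-\ubar{d},\bar{d}]$, so in Hoeffding's notation $b-a = \bar{d}+\ubar{d}$. By the definition of the multilinear marginal, $\mathbb{E}[X_i] = \mathbb{E}[f_{R(\mathbf{y})}(j)] = \Delta F_j(\mathbf{y})$, and $w_j(\mathbf{y})$ is precisely the sample mean $\tilde{X}$ of Lemma \ref{hoeffding}.

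Next I would apply Lemma \ref{hoeffding} with $t = \beta f(OPT)$ to obtain immediately
\begin{equation*}
\Pr\bigl[\,|w_j(\mathbf{y}) - \Delta F_j(\mathbf{y})| \geq \beta f(OPT)\,\bigr] \;\leq\; 2\exp\!\left(-\frac{2m\beta^2\, f(OPT)^2}{(\bar{d}+\ubar{d})^2}\right).
\end{equation*}

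To reach the exact form in the corollary it only remains to replace $f(OPT)$ by $\bar{d}$ inside the exponent; this is valid because $x\mapsto e^{-x}$ is decreasing and $f(OPT)\geq \bar{d}$. The latter inequality follows from submodularity, which gives $\bar{d} = \max_{S,j} f_S(j) = \max_{j\in E} f(\{j\})$: picking $i^\ast$ that attains this maximum, the singleton $\{i^\ast\}$ is independent (assuming WLOG that loops have been removed from $E$, since a loop can never lie in any independent set and so can be dropped without changing the problem), whence $f(OPT)\geq f(\{i^\ast\}) = \bar{d}$. Substitution then produces the stated bound.

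I do not anticipate any real obstacle: the main calculation is a textbook Hoeffding invocation, and the only subtlety is the clean disposal of matroid loops required to secure $f(OPT)\geq \bar{d}$.
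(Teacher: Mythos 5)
Your proof is correct and follows the paper's own (very terse) argument: apply Lemma \ref{hoeffding} to $X_i = f_{R_i}(j) \in [-\ubar{d},\bar{d}]$ with mean $\Delta F_j(\mathbf{y})$ and $t=\beta f(OPT)$, then weaken the exponent via $f(OPT)\geq \max_{e\in E} f(\{e\})\geq \bar{d}$. Your explicit handling of matroid loops to justify $f(OPT)\geq \max_{e} f(\{e\})$ is a small but sensible extra detail the paper leaves implicit.
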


\begin{proof}
Immediate application from the Hoeffding bound in Lemma \ref{hoeffding}, noting that $f(OPT) = \max_{S\in\mathcal{I}}f(S) \geq \max_{e\in E} f(\{e\}) =\bar{d}$.

\end{proof}

Now we can present the improvement made by the algorithm in a single step:

\begin{lemma}
Let $OPT$ be an optimal solution. Given a fractional solution $\mathbf{y}$, the Decreasing-Threshold produces a set $B$ such that, with $\mathbf{y}' = \mathbf{1} + e^{- \mathbf{1}_B\delta}\odot(\mathbf{y}-\mathbf{1}) $, we have:
\begin{equation}
F(\mathbf{y}') - F(\mathbf{y})  \geq (1-e^{-\delta}) \bigg( (1-4\epsilon)  (1-\bar{y}')f(OPT) - F(\mathbf{y}') \bigg)\\
\end{equation}
\label{deltaValDT}
\end{lemma}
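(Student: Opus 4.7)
The plan is to mirror the one-step analysis of the measured continuous greedy of Feldman et al., while tracking three new sources of slack introduced by Algorithm~\ref{algoThreshold}: the additive sampling error in the estimates of $\Delta F_e$, the $(1-\epsilon)$ gap between consecutive threshold levels, and the matroid-blocking of elements of $OPT$. Throughout, I would condition on the good event that every sampled estimate $w_e(\cdot)$ lies within $\beta f(OPT)$ of $\Delta F_e(\cdot)$ for $\beta=\epsilon/r$, which Corollary~\ref{hoeffdingNonMon} together with the sample count prescribed in Algorithm~\ref{algoThreshold} guarantees with high probability.

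The first step is to expand the left-hand side by multilinearity. Order $B=\{b_1,\ldots,b_k\}$ in the order of acceptance and let $\mathbf{y}^{(j)}$ agree with $\mathbf{y}'$ on $\{b_1,\ldots,b_j\}$ and with $\mathbf{y}$ elsewhere, noting that $\mathbf{y}^{(j-1)}$ coincides with the state $\mathbf{y}(B'_j,\delta)$ at which Decreasing-Threshold samples $b_j$. Multilinearity of $F$ then gives
$$F(\mathbf{y}')-F(\mathbf{y})=(1-e^{-\delta})\sum_{j=1}^{k}(1-y_{b_j})\,\Delta F_{b_j}(\mathbf{y}^{(j-1)}),$$
and since $b_j$ was accepted at some threshold $w_j$, on the good event $\Delta F_{b_j}(\mathbf{y}^{(j-1)})\geq w_j-\beta f(OPT)$.

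The second step is to upper-bound $\Delta F_e(\mathbf{y}')$ for every $e\in OPT\setminus B$ using a matroid exchange pairing. By down-monotonicity of $P(\mathcal{M})$ I may assume $OPT$ is a base; extend $B$ to a base $B_1$ and apply Lemma~\ref{basebijection} to obtain a bijection $\phi:B_1\to OPT$, refined to fix $B_1\cap OPT$. For $b\in B$ with $e=\phi(b)\notin B$, I would show $\Delta F_e(\mathbf{y}')\leq w_j/(1-\epsilon)+\beta f(OPT)$ as follows: at the threshold $w_j/(1-\epsilon)$ immediately preceding $w_j$, the current partial set $B''$ contains only elements accepted at that threshold or earlier, so $b\notin B''$ and $B''\subseteq B_1-b$; hence $B''+e\subseteq B_1-b+e$ is independent, the rejection of $e$ there can only be via the value test, and the bound propagates to $\mathbf{y}'$ because $F$-marginals are non-increasing in the argument. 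For $e=\phi(b')$ with $b'\in B_1\setminus B$ (the "padding" elements), the exchange $B_1-b'+e$ being a base immediately gives $B+e$ independent, so $e$ survived every threshold only on value grounds and $\Delta F_e(\mathbf{y}')\leq\epsilon(\bar{d}/r)(1-\bar{y}')+\beta f(OPT)$ from the final threshold of Algorithm~\ref{algoThreshold}.

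To finish, use the standard telescope $F(\mathbf{y}'\vee\mathbf{1}_{OPT})-F(\mathbf{y}')\leq\sum_{e\in OPT}(1-y'_e)\,\Delta F_e(\mathbf{y}')$, insert the per-element upper bounds and the per-element lower bound from the first step, and apply $|B|\leq r$, $\sum_j\beta f(OPT)\leq\epsilon f(OPT)$ and $(1-\epsilon)^{-1}\leq 1+2\epsilon$. Lemma~\ref{boundF} converts $F(\mathbf{y}'\vee\mathbf{1}_{OPT})$ into $(1-\bar{y}')f(OPT)$; collecting the four sources of $\epsilon$ slack (threshold gap, sampling error on $\Delta F_{b_j}$, sampling error on $\Delta F_e$, and final-threshold padding) into a single $(1-4\epsilon)$ factor and multiplying through by $(1-e^{-\delta})$ delivers the statement. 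The hard part will be the matroid-blocking argument: producing, for every blocked $e\in OPT\setminus B$, a witness $b=\phi^{-1}(e)\in B_1$ whose acceptance-time state lies in $B_1-b$ so that only the value test can have caused the rejection, and handling both the paired-with-$B$ and paired-with-padding cases in a way that keeps the cumulative slack at $4\epsilon$ rather than blowing up.
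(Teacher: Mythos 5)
Your proposal follows essentially the same route as the paper's proof: pair the accepted elements with $OPT$ via Lemma~\ref{basebijection}, use the decreasing-threshold structure to show each charged element in $OPT$ was rejected on value grounds (up to the $(1-\epsilon)$ threshold gap, the stopping threshold, and the sampling error), then telescope, apply submodularity, and finish with Lemma~\ref{boundF}. The one organizational difference is where you apply submodularity: the paper lower-bounds $\Delta F_{b_i}(\mathbf{y}(B_{i-1},\delta))$ in terms of $\Delta F_{o_i}(\mathbf{y}(B_{i-1},\delta))$ at the intermediate state and only later drops to $\mathbf{y}'$, whereas you upper-bound $\Delta F_e(\mathbf{y}')$ directly at the final point and make the marginal-monotonicity step explicit; both are sound and the arithmetic of the slack budget comes out the same. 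Your explicit extension of $B$ to a base $B_1$ (with $\phi$ fixing $B_1\cap OPT$) is a cleaner way to justify the exchange pairing than the paper's informal ``pad with dummies.''

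Two small slips to watch. First, with the paper's convention $\Delta F_e(\mathbf{y})=F(\mathbf{y}\vee\mathbf{1}_e)-F(\mathbf{y})=(1-y_e)\,\partial F/\partial y_e$, the factor $(1-y_{b_j})$ you write in $F(\mathbf{y}')-F(\mathbf{y})=(1-e^{-\delta})\sum_j(1-y_{b_j})\Delta F_{b_j}(\mathbf{y}^{(j-1)})$ is double-counted; the correct identity is $F(\mathbf{y}')-F(\mathbf{y})=(1-e^{-\delta})\sum_j\Delta F_{b_j}(\mathbf{y}^{(j-1)})$, and likewise $F(\mathbf{y}'\vee\mathbf{1}_{OPT})-F(\mathbf{y}')\le\sum_{e\in OPT}\Delta F_e(\mathbf{y}')$ without the extra $(1-y'_e)$. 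Second, you take $\beta=\epsilon/r$, but for the slack terms to collect into $(1-4\epsilon)(1-\bar{y}')f(OPT)$ rather than leaving behind an $O(\epsilon)f(OPT)$ term, the sampling error must be calibrated to $\beta=\frac{\epsilon}{r}(1-\bar{y}')$ so that each of the additive corrections is proportional to $(1-\bar{y}')f(OPT)$, matching the stopping threshold $\epsilon\frac{\bar d}{r}(1-\bar{y}')$. This is precisely what the paper does, and since $(1-\bar{y}')\ge 1/e$ it only affects constants, but as written your budget does not quite close to $(1-4\epsilon)$.
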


\begin{proof}
This proof follows closely the proof of Claim 4.1 in \cite{FastVondrak} but with several modifications to avoid assuming monotonicity, namely: the stopping threshold, the sampling error, and the increment bound.
Assume that the Decreasing-Threshold procedure returns a sequence of $r$ elements $B = \{b_1,b_2, \dots, b_r \}$, indexed in the order in which they were chosen. Let $O = \{o_1, o_2, \dots, o_r \}$ be an optimal solution indexed as per the exchange property of the matroids in Lemma \ref{basebijection} such that $\phi(b_i)= o_i$. Additionally, let $B_i$ and $O_i$ denote the first $i$ elements of $B$ and $O$ respectively, i.e. $B_i$ is the sequence in which the elements have been added to $B$ in algorithm \ref{algoThreshold} up until the $i$th element was added. If the procedure returns fewer than $r$ elements or the optimal solution contained fewer than $r$ elements, formally we just add dummy elements with value 0, so that $|B|=r$ and $|O| = r$.

Now let us bound the marginal values of the elements selected by the Decreasing-Threshold procedure with respect to those in the optimal solution. Recall that $\mathbf{y}(S,\delta)$ is the notation that we use to refer to the point such that $y_k(S,\delta)=y_k$ if $k\notin S$ and $y_k(S,\delta)=1 + e^{- \delta}(y_k-1)$ if $k\in S$. When $b_i$ is selected, let $w$ be the current threshold, hence $w_{b_i}(\mathbf{y}(B_{i-1},\delta))\geq w$. At this point $o_i$ is a candidate element, thus we have one of two situations depending on whether the procedure has finished: if the threshold has not dropped below $\frac{\epsilon}{r} d 
(1-\bar{y}')$, the value of $w_{o_i}(\mathbf{y}(B_{i-1},\delta))$ must be below the threshold in the previous iteration, i.e. $w_{o_i}(\mathbf{y}(B_{i-1},\delta)) \leq \frac{w}{(1-\epsilon)}$ (otherwise it would have been chosen already); conversely, if the procedure has terminated, $b_i$ is a dummy element with value 0, and the value of $w_{o_i}(\mathbf{y}(B_{i-1},\delta))$ is below the stopping threshold, i.e. $w_{o_i}(\mathbf{y}(B_{i-1},\delta))\leq \epsilon\frac{\bar{d}}{r}(1-\bar{y}') $. Consequently, we can relate the marginal value estimate of $b_i$ to that of $o_i$:
\begin{equation*}
w_{b_i}(\mathbf{y}(B_{i-1},\delta)) \geq (1-\epsilon) w_{o_i}(\mathbf{y}(B_{i-1},\delta)) - \epsilon\frac{\bar{d}}{r}(1-\bar{y}').
\end{equation*}

Note that when $b_i$ is selected we have that $B_{i-1}+b_i\in \mathcal{I}$, and by the definition of $o_i$ (i.e. the matroid exchange property) $B_{i-1}+o_i\in \mathcal{I}$.

Now we need to bound the error incurred by sampling. From Lemma \ref{hoeffdingNonMon} we can sample the marginal values up to an additive error of $\beta = \frac{\epsilon}{r} f(OPT)(1-\bar{y}')$ by taking the average of $O\bigg(\frac{r^2}{\epsilon^2}\big(\frac{\bar{d}+\ubar{d}}{\bar{d}}\big)^2\log(|E|)\bigg)$ samples with high probability (i.e. with a bad estimate probability decreasing with $\frac{1}{|E|}$). (We do not have a term $O(\frac{1}{1-\bar{y}'})$ in the number of samples because our stopping time, $t=1$, and the update rule enforce (see Lemma \ref{boundY}) that $(1-\bar{y}')\geq \frac{1}{e}$). Thus, we can write:
\begin{equation*}
\begin{aligned}
w_{b_i}(\mathbf{y}(B_{i-1},\delta)) \leq \Delta F_{b_i}(\mathbf{y}(B_{i-1},\delta)) + \frac{\epsilon}{r} f(OPT)(1-\bar{y}') \\
w_{o_i}(\mathbf{y}(B_{i-1},\delta)) \geq \Delta F_{o_i}(\mathbf{y}(B_{i-1},\delta)) - \frac{\epsilon}{r} f(OPT)(1-\bar{y}')
\end{aligned}
\end{equation*}
which, with $\bar{d}\leq f(OPT)$, can be combined with the prior bound to yield:
\begin{equation}
\Delta F_{b_i}(\mathbf{y}(B_{i-1},\delta)) \geq (1-\epsilon) \Delta F_{o_i}(\mathbf{y}(B_{i-1},\delta)) - 3 \frac{\epsilon}{r} f(OPT)(1-\bar{y}').
\label{decThresBound}
\end{equation}

Then, we can bound the improvement that the Decreasing-Threshold procedure obtains:

\begin{align*}
&F(\mathbf{y}') - F(\mathbf{y}) \\
& =\sum_{i=1}^r ( F(\mathbf{y}(B_i,\delta)) -  F(\mathbf{y}(B_{i-1},\delta)) \\
& =\sum_{i=1}^r ( y_{b_i}' -  y_{b_i}) \frac{\partial F }{\partial y_{b_i}}\bigg|_{\mathbf{y}=\mathbf{y}(B_{i-1},\delta)}\\
& =\sum_{i=1}^r (1-e^{-\delta})(1-y_{b_i}) \frac{\partial F }{\partial y_{b_i}}\bigg|_{\mathbf{y}=\mathbf{y}(B_{i-1},\delta)}\\
& = (1-e^{-\delta})\sum_{i=1}^r \Delta F_{b_i}(\mathbf{y}(B_{i-1},\delta))\\
& \geq (1-e^{-\delta}) \sum_{i=1}^r \bigg( (1-\epsilon)\Delta F_{o_i}(\mathbf{y}(B_{i-1},\delta)) -3\frac{\epsilon}{r} f(OPT)(1-\bar{y}') \bigg) \\
& = (1-e^{-\delta})\bigg( (1-\epsilon)\sum_{i=1}^r \bigg( \Delta F_{o_i}(\mathbf{y}(B_{i-1},\delta)) \bigg)-3\epsilon f(OPT)(1-\bar{y}') \bigg)\\
& \geq (1-e^{-\delta}) \bigg( (1-\epsilon) \big( F(\mathbf{y}' \vee \mathbf{1}_{\text{OPT}}) - F(\mathbf{y}')\big) -3\epsilon f(OPT) (1-\bar{y}')\bigg)\\
& \geq (1-e^{-\delta}) \bigg( (1-4\epsilon)  (1-\bar{y}')f(OPT) - F(\mathbf{y}') \bigg).\\
\end{align*}

The second equality comes from the the multilinearity of $F$. With the update step $y' = 1+e^{-\delta}(y-1)$, we have that the increment is $y' - y = (1-e^{-\delta}) (1-y) $, which gives the third inequality. The fourth equality is by definition of $\Delta F_e$. The fifth inequality is by the bound in equation \ref{decThresBound}, the sixth by submodularity, and the last one by Lemma \ref{boundF}.
\end{proof}

Finally, we can use the above result to build a recurrence relation that yields the $\frac{1}{e}-\epsilon$ approximation ratio.

\begin{theorem}
The accelerated measured continuous greedy algorithm returns a point $\mathbf{y}^*\in P(\mathcal{M})$, such that:
\begin{equation}
F(\mathbf{y}^*)\geq \bigg(\frac{1}{e}-2\epsilon\bigg)f(OPT).
\end{equation}
\label{ratioAccMeas}
\end{theorem}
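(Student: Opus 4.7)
The plan is to iterate the one-step improvement from Lemma~\ref{deltaValDT} across the $1/\delta$ integration steps of Algorithm~\ref{algoThresholdMeasured} and then solve the resulting linear recurrence at $t=1$.

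First, I would combine Lemma~\ref{deltaValDT} with the coordinate-wise upper bound of Lemma~\ref{boundY}, which gives $1-\bar{y}(t+\delta)\geq e^{-(t+\delta)}$. Substituting this into the one-step inequality and rearranging yields
\begin{equation*}
(2-e^{-\delta})\,F(\mathbf{y}(t+\delta))\;\geq\;F(\mathbf{y}(t))\;+\;(1-e^{-\delta})(1-4\epsilon)\,e^{-(t+\delta)}\,f(OPT),
\end{equation*}
which is the discrete analogue of the ODE $F'+F\geq (1-4\epsilon)e^{-t}f(OPT)$, whose integral on $[0,1]$ gives exactly the $1/e$ factor.

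Next, I would prove by induction on $t\in\{0,\delta,2\delta,\dots,1\}$ a lower bound of the form
\begin{equation*}
F(\mathbf{y}(t))\;\geq\;c\,t\,e^{-t}(1-4\epsilon)\,f(OPT),
\end{equation*}
for an appropriate constant $c=c(\delta)$ close to $1$. The base case $F(\mathbf{0})\geq 0$ is trivial. For the inductive step, plugging the hypothesis into the recurrence reduces the goal to the pointwise inequality $c\,t(e^{\delta}+e^{-\delta}-2)+(1-e^{-\delta})\geq c\,\delta(2-e^{-\delta})$. Since $e^{\delta}+e^{-\delta}-2=2(\cosh\delta-1)\geq 0$, the left-hand side is non-decreasing in $t$, so the tightest case is $t=0$, which pins down $c=(1-e^{-\delta})/(\delta(2-e^{-\delta}))$.

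Finally, evaluating at $t=1$ gives $F(\mathbf{y}(1))\geq\tfrac{(1-4\epsilon)(1-e^{-\delta})}{e\,\delta\,(2-e^{-\delta})}f(OPT)$, and substituting $\delta=\epsilon$ together with the elementary estimates $1-e^{-\epsilon}\geq\epsilon-\epsilon^{2}/2$ and $2-e^{-\epsilon}\leq 1+\epsilon$ collapses the prefactor to $\tfrac{1}{e}-O(\epsilon)$, which I would then sharpen to the stated $\tfrac{1}{e}-2\epsilon$ bound. The main obstacle I anticipate is twofold: keeping the multiplicative constants from the $(1-4\epsilon)$ factor, the $(1-e^{-\delta})/\delta$ factor, and the $1/(2-e^{-\delta})$ factor tight enough that the slack in front of $\epsilon$ really collapses to exactly~$2$ rather than a larger multiple; and invoking Corollary~\ref{hoeffdingNonMon} with a union bound over all oracle queries made across every Decreasing-Threshold call so that the sampling estimates used in Lemma~\ref{deltaValDT} are simultaneously valid with high probability, making the inductive chain sound end-to-end.
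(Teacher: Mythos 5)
Your overall plan mirrors the paper's: you feed Lemma~\ref{deltaValDT} and the coordinate bound of Lemma~\ref{boundY} into a linear recurrence over the $1/\delta$ steps of Algorithm~\ref{algoThresholdMeasured} and then evaluate at $t=1$. The divergence is in how the recurrence is solved, and this is where your proposal has a genuine, if small, gap. The paper writes down the \emph{exact} closed-form solution of the recurrence $a(n{+}1)-a(n)=k_1\bigl(k_2 e^{-(n+1)\epsilon}-a(n{+}1)\bigr)$ with $a(0)\geq 0$ and then bounds that expression. You instead guess a lower-bound ansatz $F(\mathbf{y}(t))\geq c\,t\,e^{-t}(1-4\epsilon)f(OPT)$, push it through the induction, and find the largest admissible $c=(1-e^{-\delta})/(\delta(2-e^{-\delta}))$. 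Your algebra for the inductive step and the identification of $t=0$ as the binding case are correct, so this is a valid lower bound. The trouble is that the ansatz, being a fixed closed form independent of the step index, is strictly weaker than the exact recurrence solution, and the slack is just large enough to miss the stated constant.

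Concretely, at $t=1$ with $\delta=\epsilon$ your bound evaluates to
\begin{equation*}
\frac{(1-e^{-\epsilon})(1-4\epsilon)}{e\,\epsilon\,(2-e^{-\epsilon})}
= \frac{1}{e}\Bigl(1-\tfrac{11}{2}\epsilon + O(\epsilon^2)\Bigr),
\end{equation*}
and $\tfrac{11/2}{e}\approx 2.02>2$, so for small $\epsilon$ your prefactor falls \emph{below} $\tfrac{1}{e}-2\epsilon$. Numerically, at $\epsilon=10^{-3}$ your bound is $\approx 0.365856$ while $\tfrac{1}{e}-2\epsilon\approx 0.365879$; the shortfall persists for roughly all $\epsilon\lesssim 0.008$. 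In contrast, the paper's exact recurrence solution gives $\tfrac{1}{e}\bigl(1-\tfrac{9}{2}\epsilon+O(\epsilon^2)\bigr)$, and $\tfrac{9/2}{e}\approx 1.66<2$, which is why the paper's final ``$\leq \tfrac{5}{e}\epsilon\leq 2\epsilon$'' step actually goes through. You flag exactly this risk as the main obstacle, but as written the ansatz does not and cannot collapse to $\tfrac{1}{e}-2\epsilon$; to close the gap you need to either solve the recurrence exactly as the paper does, weaken the claim to $\tfrac{1}{e}-3\epsilon$, or run the algorithm with $\epsilon$ replaced by $\epsilon/2$. Your remark about needing a union bound over all sampling events so that Lemma~\ref{deltaValDT} holds simultaneously across every call to Decreasing-Threshold is correct and is a point the paper leaves implicit.
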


\begin{proof}
From Lemma \ref{deltaValDT} we have that:
\begin{equation*}
F(\mathbf{y}(t+\delta)) - F(\mathbf{y}(t))\geq (1-e^{-\delta}) \bigg( (1-4\epsilon)  (1-\bar{y}(t+\delta))f(OPT) - F(\mathbf{y}(t+\delta)) \bigg)
\label{dtimprovement}
\end{equation*}

We can now use the bound on the value of the coordinates of $\mathbf{y}$ in Lemma \ref{boundY},  we have that $y_{i}(t)\leq 1-e^{-t} \quad \forall  i \in E$, hence we can write:
\begin{equation}
F(\mathbf{y}(t+\delta)) - F(\mathbf{y}(t))\geq (1-e^{-\delta}) \bigg( (1-4\epsilon)  e^{-(\delta+t)}f(OPT) - F(\mathbf{y}(t+\delta)) \bigg).
\end{equation}

Consider the recurrence relation $a(n+1)-a(n)=k_1 (k_2 \exp (-(n+1) \epsilon )-a(n+1))$, which, with $a(0)=a_0$, has the following solution 
\begin{equation}
a(n)=\frac{(\frac{1}{k_1+1})^n (a_0 (-k_1+e^{\epsilon }-1)+k_1 k_2)-k_1 k_2 e^{-n \epsilon }}{-k_1+e^{\epsilon }-1}.
\end{equation}
This recurrence is equivalent to equation \ref{dtimprovement} if we set $k_1=(1-e^{-\delta})$, $k_2 = (1-4\epsilon)$, $n = \frac{t}{\delta}$, and $\delta = \epsilon$. Thus, substituting and simplifying assuming that $F(\mathbf{0})\geq 0$ (due to the non-negativity of $f$), we have the following lower bound on the value of the solution $\mathbf{y}(t)$ for any time $t\in[0,1]$:
\begin{equation*}
F(\mathbf{y}(t))\geq \frac{ (1-e^{-\epsilon }) (1-4 \epsilon ) e^{\epsilon  (-(\frac{t}{\epsilon }+1))} (e^{\epsilon  (\frac{t}{\epsilon }+1)} (\frac{1}{2-e^{-\epsilon }})^{t/\epsilon }-e^{\epsilon })}{e^{-\epsilon }+e^{\epsilon }-2} f(OPT).
\end{equation*}
So, when the algorithm ends at $t=1$, we have:
\begin{equation*}
F(\mathbf{y}(1))\geq \frac{ (1-e^{-\epsilon }) (1-4 \epsilon ) e^{\epsilon  (-(\frac{1}{\epsilon }+1))} (e^{\epsilon  (\frac{1}{\epsilon }+1)} (\frac{1}{2-e^{-\epsilon }})^{1/\epsilon }-e^{\epsilon })}{e^{-\epsilon }+e^{\epsilon }-2} f(OPT).
\end{equation*}
We can find a more intuitive version of this bound by observing that, clearly, for $0\leq \epsilon \leq 1$:
\begin{equation}
\frac{1}{e}-\frac{ (1-e^{-\epsilon }) (1-4 \epsilon ) e^{\epsilon  (-(\frac{1}{\epsilon }+1))} (e^{\epsilon  (\frac{1}{\epsilon }+1)} (\frac{1}{2-e^{-\epsilon }})^{1/\epsilon }-e^{\epsilon })}{e^{-\epsilon }+e^{\epsilon }-2} \leq \frac{5}{e}\epsilon \leq 2\epsilon
\end{equation}
Hence:
\begin{equation}
F(\mathbf{y}(1))\geq \bigg(\frac{1}{e}-2\epsilon\bigg)f(OPT).
\end{equation}
Finally, from Lemma \ref{boundY} we have that $\mathbf{y}(1) \in P(\mathcal{M})$.

\end{proof}

\subSectionInShort{Running Time}
\inArticle{As it is common in the submodular maximisation literature, we} quantify the running time in terms of value oracle calls to the submodular function and matroid independence oracle calls. We analyse the running time of the algorithm in two steps: first we study the running time of the Decreasing-Threshold procedure, and then that of the the continuous greedy.

\begin{lemma}
The Decreasing-Threshold procedure makes \\ $O\left(\frac{|E|r^2}{\epsilon^3}\log(|E|)\log(\frac{r}{\epsilon}) \left(\frac{\bar{d}+\ubar{d}}{\bar{d}}\right)^2\right)$
 value oracle calls, and $O\big(\frac{|E|}{\epsilon}\log{\frac{r}{\epsilon}}\big)
$ independence oracle calls.
\label{complexityOfDT}
\end{lemma}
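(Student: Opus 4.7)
The plan is to count the oracle calls by multiplying the number of outer threshold iterations, the number of inner per-element iterations, and the cost per inner iteration. I would first bound the number of iterations of the outer \texttt{for} loop on the threshold $w$. It starts at $w=\bar d$ and terminates once $w < \epsilon\frac{\bar d}{r}(1-\bar y')$, with $w$ shrunk by a factor $(1-\epsilon)$ at every step. Hence the number of outer iterations is at most
\begin{equation*}
\left\lceil \log_{1/(1-\epsilon)}\!\left(\frac{r}{\epsilon(1-\bar y')}\right)\right\rceil.
\end{equation*}
Using the standard estimate $\log(1/(1-\epsilon)) \geq \epsilon$ for $\epsilon \in (0,1)$, this is $O(\frac{1}{\epsilon}\log(\frac{r}{\epsilon(1-\bar y')}))$. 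The factor $(1-\bar y')$ is then eliminated by invoking Lemma \ref{boundY}: since the outer algorithm only invokes \texttt{Decreasing-Threshold} at times $t\leq 1-\delta$, all coordinates of the updated vector $\mathbf{y}(B,\delta)$ are bounded away from $1$, yielding $1-\bar y' \geq e^{-1}$, which absorbs into the $O(\cdot)$ to leave $O(\frac{1}{\epsilon}\log(r/\epsilon))$ outer iterations.

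Next, I would note that each outer iteration sweeps once over all $|E|$ elements, and for each element the algorithm does exactly two things: (i) it estimates $w_e(B,\mathbf{y})$ by averaging $m = O(\frac{r^2}{\epsilon^2}(\frac{\bar d+\ubar d}{\bar d})^2\log|E|)$ i.i.d.\ samples of $f_{R(\mathbf{y}(B,\delta))}(e)$, each of which costs one value oracle call; and (ii) it performs a single independence test $B+e \in \mathcal{I}$. Multiplying through gives per-outer-iteration costs of $O(\frac{|E|r^2}{\epsilon^2}\log(|E|)(\frac{\bar d+\ubar d}{\bar d})^2)$ value oracle calls and $O(|E|)$ independence oracle calls.

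Combining the two bounds gives the claimed totals:
\begin{equation*}
O\!\left(\frac{|E|r^2}{\epsilon^3}\log(|E|)\log\!\left(\frac{r}{\epsilon}\right)\left(\frac{\bar d+\ubar d}{\bar d}\right)^2\right) \text{ value oracle calls},
\end{equation*}
and $O(\frac{|E|}{\epsilon}\log(r/\epsilon))$ independence oracle calls. The only subtle point is the removal of the $(1-\bar y')^{-1}$ factor from the logarithm, which relies on Lemma \ref{boundY} together with the fact that \texttt{Decreasing-Threshold} is only ever called on points $\mathbf{y}(t)$ with $t \leq 1$; everything else is a routine geometric-series count, so I expect no real obstacle.
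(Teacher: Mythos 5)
Your proof is correct and follows essentially the same route as the paper's: bound the number of threshold iterations by $O(\frac{1}{\epsilon}\log\frac{r}{\epsilon})$ (using Lemma \ref{boundY} to absorb the $(1-\bar y')^{-1}$ factor into a constant), observe that each iteration sweeps over $|E|$ elements performing one estimate of $\Delta F_e$ at a cost of $O(\frac{r^2}{\epsilon^2}(\frac{\bar d+\ubar d}{\bar d})^2\log|E|)$ samples plus one independence test, and multiply. The only difference is cosmetic: you make the geometric count and the elimination of $(1-\bar y')$ explicit where the paper states them more tersely.
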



\begin{proof}
First, the number of values that the threshold takes in the Decreasing-Threshold procedure to reach the stopping threshold is, considering that the term $(1-\bar{y})\geq \frac{1}{e}$ due to Lemma \ref{boundY}, $O(\frac{\log{\frac{\epsilon}{r}}}{\log{(1-\epsilon)}})$. Second, for each threshold value, the algorithm performs $O(|E|)$ estimates of $\Delta F_e$, and $O(|E|)$ calls to the independence oracle. Therefore, the number of independence oracle calls is the number of calls per threshold step multiplied by the number of threshold steps, i.e.:
\begin{equation}
O\bigg(\frac{|E|}{\epsilon}\log{\frac{r}{\epsilon}}\bigg),
\end{equation}
which has been simplified noting that $\frac{\log{\frac{\epsilon}{r}}}{\log{1-\epsilon}}\leq \frac{1}{\epsilon }\log (\frac{r}{\epsilon })$.\\
Now, each estimate of $\Delta F_e$ requires $O\bigg(\frac{r^2}{\epsilon^2}\big(\frac{\bar{d}+\ubar{d}}{\bar{d}}\big)^2\log(|E|)\bigg)$ samples. Hence, we can conclude that the number of value oracle calls is 
\begin{equation*}
O\left(\frac{|E|r^2}{\epsilon^3}\log(|E|)\log \left(\frac{r}{\epsilon} \right) \left(\frac{\bar{d}+\ubar{d}}{\bar{d}}\right)^2\right).
\end{equation*}
\end{proof}

We can now quantify the running time of the whole algorithm.

\begin{theorem}
The accelerated measured continuous greedy algorithm makes \\ $O\left(\frac{|E|r^2}{\epsilon^4}\log(|E|)\log(\frac{r}{\epsilon}) \left(\frac{\bar{d}+\ubar{d}}{\bar{d}}\right)^2\right)$
 value oracle calls, and $O\big(\frac{|E|}{\epsilon^2}\log{\frac{r}{\epsilon}}\big)
$ independence oracle calls.
\end{theorem}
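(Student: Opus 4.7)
The plan is to combine the per-iteration cost established in Lemma \ref{complexityOfDT} with the number of outer-loop iterations executed by Algorithm \ref{algoThresholdMeasured}. Since $\delta = \epsilon$ and the outer \textbf{for} loop iterates over $t \in \{0, \delta, 2\delta, \dots, 1-\delta\}$, there are exactly $\frac{1}{\delta} = \frac{1}{\epsilon}$ iterations. Each iteration invokes the Decreasing-Threshold procedure once; the subsequent update of the coordinates of $\mathbf{y}$ is a purely arithmetic $O(|E|)$ operation that requires neither a value oracle call nor an independence oracle call, so it does not contribute to the oracle complexity.

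Given that observation, I would simply multiply the bounds from Lemma \ref{complexityOfDT} by $\frac{1}{\epsilon}$. For value oracle calls this yields
\begin{equation*}
\frac{1}{\epsilon}\cdot O\!\left(\frac{|E|r^2}{\epsilon^3}\log(|E|)\log\!\left(\frac{r}{\epsilon}\right)\!\left(\frac{\bar{d}+\ubar{d}}{\bar{d}}\right)^{\!2}\right) = O\!\left(\frac{|E|r^2}{\epsilon^4}\log(|E|)\log\!\left(\frac{r}{\epsilon}\right)\!\left(\frac{\bar{d}+\ubar{d}}{\bar{d}}\right)^{\!2}\right),
\end{equation*}
and for independence oracle calls
\begin{equation*}
\frac{1}{\epsilon}\cdot O\!\left(\frac{|E|}{\epsilon}\log\frac{r}{\epsilon}\right) = O\!\left(\frac{|E|}{\epsilon^2}\log\frac{r}{\epsilon}\right),
\end{equation*}
which are precisely the claimed complexities.

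There is essentially no hard step here; the only subtlety worth flagging in the write-up is the justification that the per-step update of $\mathbf{y}$ in Algorithm \ref{algoThresholdMeasured} does not itself invoke the value oracle (the function $F$ is never evaluated outside the Decreasing-Threshold sub-procedure), so the total oracle budget is indeed just a factor of $1/\epsilon$ times the per-call bound. With that observation stated, the proof is a two-line multiplication.
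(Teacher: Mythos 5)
Your proposal is correct and takes essentially the same approach as the paper: multiply the per-call complexity from Lemma \ref{complexityOfDT} by the $\frac{1}{\epsilon}$ outer-loop iterations coming from $\delta=\epsilon$. Your added remark that the coordinate update of $\mathbf{y}$ incurs no oracle calls is a useful clarification but not a substantive departure.
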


\begin{proof}
Considering that the number of steps of the procedure, with $\delta = \epsilon$, is $\frac{1}{\epsilon}$. The number of calls to boths oracles is simply the number of calls by the Decreasing-Threshold procedure (Algorithm \ref{algoThreshold}) in Lemma \ref{complexityOfDT} multiplied by the number of steps in the continuous greedy algorithm (Algorithm \ref{algoThresholdMeasured}).
\end{proof}

\sectionInShort{Discussion and Future Work}
\inArticle{In this paper we} have presented a $\frac{1}{e}-\epsilon$-approximation algorithm for general non-negative submodular function maximisation that requires $O(\frac{nr^2}{\epsilon^4} \big(\frac{\bar{d}+\ubar{d}}{\bar{d}}\big)^2 \log^2({\frac{n}{\epsilon}}))$ value oracle calls. This is the fastest $\frac{1}{e}$-approximation algorithm currently available, which enables the use of general (non-monotone) matroid-constrained submodular maximisation for many applications for which existing algorithms were implausibly slow. We think this is of great significance because there has been a recent surge of interest for applying submodular maximisation in fields where large problem instances are paramount, such as Machine Learning \cite{mirzasoleiman2016fast, Bilmes2017, lin2010multi}, particularly in the field of summarisation where non-monotone submodular functions are natural \cite{tschiatschek2014learning, dasgupta2013summarization}. Our algorithm is slower than the one presented for the monotone case in \cite{FastVondrak} by $O\big(r\big(\frac{\bar{d}+\ubar{d}}{\bar{d}}\big)^2\big)$  due to the inability to sample the marginal values of the multilinear extension up to an additive and multiplicative bound. If we could, then we would reduce the additional value oracle calls required to achieve an algorithm with the same running time, we believe this might be possible. 

A future avenue of research would be to combine our work with the very interesting results in \cite{Buchbinder2014}, where an efficient algorithm is proposed to allow the trade-off of value oracle calls and matroid independence calls for non-negative monotone submodular functions, to enable query trade-off for general non-negative submodular functions. Another interesting path is to combine the more continuous-like measured continuous greedy update step that we present here with the acceleration techniques for strong submodular functions presented in \cite{Wang2015} to produce an adaptive step algorithm. This way, in each step we could use a large $\delta$ that extended to the boundary of the region of validity of the set $B$, instead of taking a $\delta$ that is small enough to satisfy the worst case. Another obvious improvement on the algorithms presented here would be to combine the ideas from the Lazy Greedy Algorithm \cite{Minoux1978} to adaptively change the decrement of the threshold in the Decreasing-Threshold procedure. 

\providecommand{\href}[2]{#2}\begingroup\raggedright\endgroup


\end{document}